\let\doendproof\endproof
\renewcommand\endproof{~\hfill\qed\doendproof}
\newcommand{\keywords}[1]{\par\addvspace\baselineskip
\noindent\keywordname\enspace\ignorespaces#1}
\newenvironment{blockquote}{%
   \list{}{\rightmargin\leftmargin}\item\relax
   \itshape
}{\endlist}
\newcommand{\R}{\mathbb{R}}
\renewcommand{\L}{\mathcal{L}}
\newcommand{\lapl}{\mathbf{L}}
\renewcommand{\H}{\mathcal{H}}
\newcommand{\hkpr}{\rho_{t,f}}
\newcommand{\vol}{\textmd{vol}}
\newcommand{\phkpr}{\rho_{t,s}}
\newcommand{\phkprapprox}{\hat{\rho}_{t,s}}
\newcommand{\estphkpr}{DistributedEstimatePHKPR}
\newcommand{\alglocalcluster}{DistributedLocalCluster}
\newcommand{\Kparam}{\frac{\log(\epsilon^{-1})}{\log\log(\epsilon^{-1})}}
\newcommand{\rparam}{\frac{1}{\epsilon^{3}}\log n}
\newcommand{\localclustercomplexity}{\frac{\log(\epsilon^{-1}) \log
n}{\log\log(\epsilon^{-1})} + \frac{1}{\epsilon}\log n}
\newcommand{\rK}{\frac{\log(\epsilon^{-1})\log n}{\epsilon^3
\log\log(\epsilon^{-1})}}
\newcommand{\tparam}{\phi^{-1}\log(\frac{2\sqrt{\varsigma}}{1-\epsilon} +
2\epsilon\sigma)}
\begin{document}
\mainmatter  

\title{Distributed Algorithms for Finding Local Clusters Using Heat Kernel
Pagerank\thanks{An extended abstract appeared in \emph{Proceedings of WAW}
(2015).  Research supported in part by Office of Naval Research (ONR)
N00014-13-1-0257.}}

\titlerunning{Distributed Heat Kernel Local Clusters}

%
%
\author{Fan Chung%
\and Olivia Simpson}
\authorrunning{Chung and Simpson}

\institute{Department of Computer Science and Engineering,\\
University of California, San Diego\\
La Jolla, CA 92093\\
\mailsa}

\toctitle{}
\tocauthor{}
\maketitle

\begin{abstract}
A distributed algorithm performs local computations on pieces of input and
communicates the results through given communication links.  When processing a
massive graph in a distributed algorithm, local outputs must be configured as a
solution to a graph problem without shared memory and with few rounds of
communication.  In this paper we consider the problem of computing a local
cluster in a massive graph in the distributed setting.  Computing local clusters
are of certain application-specific interests, such as detecting communities in
social networks or groups of interacting proteins in biological networks.  When
the graph models the computer network itself, detecting local clusters can help
to prevent communication bottlenecks.  We give a distributed algorithm that
computes a local cluster in time that depends only logarithmically on the size
of the graph in the CONGEST model.  In particular, when the conductance of the
optimal local cluster is known, the algorithm runs in time entirely independent
of the size of the graph and depends only on error bounds for approximation.  We
also show that the local cluster problem can be computed in the $k$-machine
distributed model in sublinear time.  The speedup of our local cluster
algorithms is mainly due to the use of our distributed algorithm for heat kernel
pagerank.

\keywords{Distributed algorithms, local cluster, sparse cut, heat kernel
pagerank, heat kernel, random walk}
\end{abstract}

\section{Introduction}

Distributed computation is an increasingly important framework as the demand for
fast data analysis grows and data simultaneously becomes too large to fit in
main memory.  As distributed systems for large-scale graph processing such as
Pregel~\cite{malewicz2010pregel}, GraphLab~\cite{yucheng2010graphlab}, and
Google's MapReduce~\cite{dean2004mapreduce} are rapidly developing, there is a
need for both theoretical and practical bounds in adapting classical graph
algorithms to a modern distributed and parallel setting.

A distributed algorithm performs local computations on pieces of input and
communicates the results through given communication links.  When processing a
massive graph in a distributed algorithm, local outputs must be configured
without shared memory and with few rounds of communication.  A central problem
of interest is to compute local clusters in large graphs in a distributed
setting.

Computing local clusters are of certain application-specific interests, such as
detecting communities in social networks~\cite{lldm:localnetwork:08} or groups
of interacting proteins in biological networks~\cite{liao:protein:09}.  When the
graph models the computer network itself, detecting local clusters can help
identify communication bottlenecks, where one set of well-connected nodes is
separated from another by a small number of links.  Further, being able to
identify the clusters quickly prevents bottlenecks from developing as the
network grows.

A local clustering algorithm computes a set of vertices in a graph with a small
Cheeger ratio (or so-called conductance as defined in
Section~\ref{sec:definitions}).  Moreover, we ask that the algorithm use only
local information.  In the static setting, an important consequence of this
locality constraint is running times proportional to the size of the output set,
rather than the entire graph.  In this paper, we present the first algorithms
for computing local clusters in two distributed settings that finish in a
sublinear number of rounds of communication.

A standard technique in local clustering algorithms is the so-called
\emph{sweep} algorithm.  In a sweep, one orders the vertices of a graph
according to some real-valued function defined on the vertex set and then
investigates the cut set induced by each prefix of vertices in the ordering.
The classical method of spectral clustering uses eigenvectors as functions for
the sweep.  For local clustering algorithms, the sweep functions are based on
random
walks~\cite{ls:mixingisoperimetric:90,ls:randomwalks:93,st:graphpartitioning:stoc04}.
In~\cite{acl:prgraphpartition:focs06}, the efficiency of the local clustering
algorithm is due to the use of PageRank vectors as the sweep
functions~\cite{bp:websearchanatomy:isdn98}.  In this paper, the main leverage
in the improved running times of our algorithms is to use the heat kernel
pagerank vector for performing a sweep.  In particular, we are able to exploit
parallelism in our algorithm for computing the heat kernel pagerank and give a
distributed random walk-based procedure which requires fewer rounds of
communication and yet maintains similar approximation guarantees as previous
algorithms.


In Section~\ref{sec:models}, we will describe two distributive models -- the
CONGEST model and the $k$-machine model.  We demonstrate in two different
distributed settings that a heat kernel pagerank distribution can be used to
compute local clusters with Cheeger ratio $O(\sqrt{\phi})$ when the optimal
local cluster has Cheeger ratio $\phi$.  With a fast, parallel algorithm for
approximating the heat kernel pagerank and efficient local computations, our
algorithm works on an $n$-vertex graph in the CONGEST, or standard message
passing, model with high probability in at most
$O\left(\localclustercomplexity\right)$ rounds of communication where $\epsilon$
is an error bound for approximation.  This is an improvement over the previously
best-performing local clustering algorithm in~\cite{dassarma:sparsecut:dc} which
uses a personalized PageRank vector and finishes in $O\left( \frac{1}{\alpha}
\log^2 n + n\log n \right)$ rounds in the CONGEST model for any $0 < \alpha <
1$.  We then extend our results to the $k$-machine model to show that a local
cluster can be computed in $\tilde{O}\left(
\frac{\log(\epsilon^{-1})}{\epsilon^3 k^2 \log\log(\epsilon^{-1})} +
\frac{1}{\epsilon k^2} + \left( \frac{\log(\epsilon^{-1})}{k
\log\log(\epsilon^{-1})} + \frac{1}{k\epsilon}\right)\max\left\{
\frac{1}{\epsilon^{3}}, \Delta \right\} \right)$ rounds, where $\Delta$ is the
maximum degree in the graph, with high probability.

\subsection{Related Work}
The idea of computing local clusters with random walks was introduced by
Lov\'asz and Simonovits in their works analyzing the isoperimetric properties of
random walks on graphs~\cite{ls:mixingisoperimetric:90,ls:randomwalks:93}.
Spielman and Teng~\cite{st:graphpartitioning:stoc04} expanded upon these ideas
and gave the first nearly-linear time algorithm for local clustering, improving
the original framework by sparsifying the graph.  The algorithm
of~\cite{st:graphpartitioning:stoc04} finds a local cluster with Cheeger ratio
$O(\sqrt{\phi}\log^{3/2}n)$ in time $O(m(\log n/\phi)^{O(1)})$, where $m$ is the
number of edges in the graph.  Each of these algorithms uses the distribution of
random walks of length $O(\frac{1}{\phi})$.
Andersen et al.~\cite{acl:prgraphpartition:focs06} give a local clustering
algorithm using the distribution given by a PageRank vector.  Their algorithm
promises a $O(\sqrt{\phi}\log^{1/2}n)$ cluster approximation and runs in time
$O(\frac{m}{\phi}\log^4 m)$.  Orecchia et al. use a variant of heat kernel
random walks in their randomized algorithm for computing a cut in a graph with
prescribed balance constraints~\cite{osv:balsep:11}.  A key subroutine in the
algorithm is a procedure for computing $e^{-A}v$ for a positive semidefinite
matrix $A$ and a unit vector $v$ in time $\tilde{O}(m)$ for graphs on $n$
vertices and $m$ edges.  Indeed, heat kernel has proven to be an efficient and
effective tool for local cluster
detection~\cite{kloster2014heat,cs:hkprcluster:iwoca14}.

Andersen and Peres~\cite{ap:evolving:09} simulate a volume-biased evolving set
process to find sparse cuts.  Their algorithm improves the ratio between the
running time of the algorithm on a given run and the volume of the output set
while maintaining similar approximation guarantees as previous algorithms.
Their algorithm is later improved in~\cite{gt:optimalcluster:12}.  Arora, Rao,
and Vazirani~\cite{arora2009expander} give a $O(\sqrt{\log n})$-approximation
algorithm using semi-definite programming techniques, however it is slower than
algorithms based on spectral methods and random walks.

For distributed algorithms, in~\cite{das2013distributed} fast random walk-based
distributed algorithms are given for estimating mixing time, conductance and the
spectral gap of a network. In~\cite{sarma2013fast}, distributed algorithms are
derived for computing PageRank vectors with $O(\frac{1}{\alpha}\log n)$ rounds
for any $0 < \alpha < 1$ with high probability.  Das Sarma et
al.~\cite{dassarma:sparsecut:dc} give two algorithms for computing sparse cuts
in the CONGEST distributed model. The first algorithm uses  random walks and is
based on the analysis of~\cite{st:graphpartitioning:stoc04}.  By incorporating
the results of~\cite{das2013distributed},  they show that the stationary
distribution of a random walk of length $l$ can be computed in $O(l)$ rounds.
The second algorithm in~\cite{dassarma:sparsecut:dc} uses PageRank vectors and
is based on the analysis of~\cite{acl:prgraphpartition:focs06}.  By using the
results of~\cite{sarma2013fast}, the authors of~\cite{dassarma:sparsecut:dc}
compute local clusters in $O((\frac{1}{\phi} + n) \log n)$ rounds with standard
random walks and $O(\frac{1}{\alpha} \log^2 n + n\log n)$ rounds using PageRank
vectors.  

\section{The Setting and Our Contributions}

\subsection{Models of Computation}\label{sec:models}

We consider two models of distributed computation -- the CONGEST model and the
$k$-machine model.  In each, data is distributed across nodes (machines) of a
network which may communicate over specified communication links in rounds.
Memory is decentralized, and the goal is to minimize the running time by
minimizing the number of rounds required for computation for an arbitrary input
graph $G$.  We emphasize that local communication is taken to be free.

\subsubsection{The CONGEST model} The first model we consider is the CONGEST
model. In this model, the communication links are exactly the edges of the input
graph and each vertex is mapped to a dedicated machine.  The CONGEST (or
standard message-passing) model was introduced
in~\cite{pandurangan:congest,peleg:congest} to simulate real-world bandwidth
restrictions across a network.

Due to how the vertices are distributed in the network, we simplify the model by
assuming the computer network is the input graph $G = (V,E)$ on $n = |V|$ nodes
or machines and $m = |E|$ edges or communication links.  Each node has a unique
$\log n$-bit ID.  Initially each node only possesses its own ID and the IDs of
each of its neighbors, and in some instances we may allow nodes some metadata
about the graph (the value of $n$, for instance).  Nodes can only communicate
through edges of the network and communication occurs in rounds.  That is, any
message sent at the beginning of round $r$ is fully transmitted and received by
the end of round $r$.  We assume that all nodes run with the same processing
speed.  Most importantly, we only allow $O(\log n)$ bits to be transmitted
across any edge per round.

\subsubsection{The $k$-machine model} The defining difference between the
$k$-machine model and the CONGEST model is that, whereas vertices are mapped to
distinct, dedicated machines in the CONGEST model, a number of vertices may be
mapped to the same machine in the $k$-machine model.  This model is meant to
more accurately simulate distributed graph computation in systems such as
Pregel~\cite{malewicz2010pregel} and GraphLab~\cite{yucheng2010graphlab}.

We consider computing over massive datasets distributed over nodes of the
$k$-machine network.  The complete data is never known by any individual
machine, and there is no shared memory.  Each machine executes an instance of a
distributed algorithm, and the output of each machine is with respect to the
data points it hosts.  A solution to a full problem is then a particular
configuration of the outputs of each of the machines.  The model is discussed in
greater detail in Section~\ref{sec:kmachine}.

\medskip
The two models are limiting and advantageous in different ways, and one is not
inherently better than the other.  For instance, since many vertices are mapped
to a single machine in the $k$-machine model, there is more ``local
information'' available since vertices sharing a machine can communicate for
free.  However, since communication is restricted to the communication links in
the computer network, vertex-vertex communication is somewhat less restrictive
in the CONGEST model since links exactly correspond to edges.  The consequences
of these differences are largely observed in time complexity, and certain graph
problems are more suited to one model than the other.

In this paper we analyze our algorithmic techniques in the CONGEST model, and
then use the Conversion Theorem of~\cite{klauck2015distributed} to give an
efficient probabilistic algorithm in the $k$-machine model for computing local
clusters.

\subsection{Local Clusters and Heat Kernel Pagerank}\label{sec:definitions}
Throughout this paper, we consider a graph $G = (V,E)$ with $n=|V|$ and $m=|E|$
that is connected and undirected.  In this section we give some definitions that
will make our problem statement and results precise.

\subsubsection{Personalized heat kernel pagerank}\label{sec:hkpr} The heat
kernel pagerank is so named for the \emph{heat kernel} of the graph, $\H_t =
e^{-t\L}$, where $\L$ is the normalized graph Laplacian $\L =
D^{-1/2}(D-A)D^{-1/2}$.  Here $D$ is the diagonal matrix whose entries
correspond to vertex degree and $A$ is the symmetric adjacency matrix.  The heat
kernel is a solution to the heat equation $\frac{\partial u}{\partial t} = -\L
u$, and thus has fundamental connections to diffusion properties of a graph.
Because of its connection to random walks, for heat kernel pagerank we use a
similar heat kernel matrix, $H_t = e^{-t\lapl}$, where $\lapl = I - P$.  Here,
$I$ is the $n \times n$ identity matrix and $P = D^{-1}A$ is the transition
probability matrix corresponding to the following standard random walk on the
graph: at each step, move from a vertex $v$ to a random neighbor $u$.  Then the
\emph{heat kernel pagerank} is defined in terms of a preference (row) vector $f$
as $\hkpr = f H_t$.  When $f$, as a row vector, is some probability distribution
over the vertices, the following formulation is useful for our Monte Carlo-based
approximation algorithm:
\begin{equation}\label{eq:hkpr}
\hkpr = f H_t = \sum_{k=0}^{\infty} e^{-t}\frac{t^k}{k!} fP^k.
\end{equation}

In this paper, we consider preference vectors $f = \chi_s$ with all probability
on a single vertex $s$, called the \emph{seed}, and zero probability elsewhere.
This is a common starting distribution for the PageRank vector, as well,
commonly referred to as a \emph{personalized PageRank} (or PPR) vector.  We will
adapt similar terminology and refer to the vector $\phkpr := \rho_{t, \chi_s}$
as the \emph{personalized heat kernel pagerank vector for $s$}, or simply PHKPR.

\subsubsection{Cheeger ratio} For a non-empty subset $S \subset V$ of vertices
in a graph, define the \emph{volume} to be $\vol(S) = \sum_{v\in S} d_v$, where
$d_v$ is the degree of vertex $v$.  The \emph{Cheeger ratio} of a set $S$ is
defined as $\Phi(S) = \frac{|E(S, \bar{S})|}{\min\{\vol(S), \vol(\bar{S})\}}$,
where we use $\bar{S}$ here to denote the set $V \setminus S$, and $E(S,
\bar{S})$ is the set of edges with one endpoint in $S$ and the other in
$\bar{S}$.  The Cheeger ratio of a graph, then, is the minumum Cheeger ratio
over all sets in the graph, $\Phi(G) = \min_{S\subset V} \Phi(S)$.  The Cheeger
ratio provides a quantitative measure concerning graph clusters and is related
to the expansion and spectral gap of a graph~\cite{ch0}.

\subsubsection{Local cluster and sparse cut} The sparse cut problem is to
approximate the Cheeger ratio $\Phi(G)$ of the graph.  This is typically done by
finding a set of vertices whose Cheeger ratio is close to $\Phi(G)$-- that is, a
set which approximates the sparsest cut in the graph.  For the local clustering
problem, however, we are concerned with finding a set with small Cheeger ratio
within a specified subset of vertices.  Alternatively, one can view this as a
sparse cut problem on an induced subgraph.  This Cheeger ratio is sometimes
called a \emph{local Cheeger ratio} with respect to the specified subset.

A local clustering algorithm promises the following.  Given a set $S$ of Cheeger
ratio $\phi$, many vertices in $S$ may serve as seeds for a sweep which finds a
set of Cheeger ratio close to $\phi$.


\subsection{Our Results}
In this work we give a distributed algorithm which computes a local cluster of
Cheeger ratio $O(\sqrt{\phi})$ with high probability, while the optimal local
cluster has Cheeger ratio $\phi$.  Our algorithm finishes in $O\left(
\localclustercomplexity \right)$ rounds in the CONGEST model
(Theorem~\ref{thm:localcluster}) where $\epsilon$ is an error bound.  Further,
if $\phi$ is known, we show how to compute a local cluster in $O\left( \Kparam +
\frac{1}{\epsilon}\right)$ rounds (Theorem~\ref{thm:localclusterphi}). Our
algorithm is an improvement of previous local clustering algorithms by
eliminating a log factor in the performance guarantee.  Further, its running
time improves upon algorithms using standard and PageRank random walks.  In
particular, given the Cheeger ratio of an optimal local cluster, our algorithm
runs in time only dependent upon the approximation error, $\epsilon$, and is
entirely independent of the input graph.  The algorithms and accompanying
analysis are given in Section~\ref{sec:localcluster}.

Similar to existing local clustering algorithms, our algorithm uses a variation
of random walks
to compute a local cluster.  However, rather than a standard random
walk~\cite{st:graphpartitioning:stoc04} or a PageRank random walk with reset
probabilities~\cite{acl:prgraphpartition:focs06}, we use the \emph{heat kernel
random walk} (see Section~\ref{sec:distributedphkpr}).

We remark that in the analysis of random walks, the usual notion of
approximation is total variation distance or some other vector norm based
distance.  However, in the approximation of PageRank or heat kernel pagerank for
large graphs, the definition of approximation is quite different.  Namely, we
say some vector $\phkprapprox$ is an \emph{$\epsilon$-approximate PHKPR vector}
for $\phkpr$ with a seed vertex $s$ and diffusion parameter $t \in R$ if:
\begin{blockquote}
\begin{enumerate}
\item $(1-\epsilon)\phkpr(v) - \epsilon \leq \phkprapprox(v) \leq
(1+\epsilon)\phkpr(v)$, and
\item for each node $v$ with $\phkprapprox(v)=0$, it must be that
$\phkpr(v) \leq \epsilon$.
\end{enumerate}
\end{blockquote}

With the above definition of approximation, we here define the heat kernel
pagerank approximation problem (or the \emph{PHKPR problem} in short): given a
vertex $s$ of a graph and a diffusion parameter $t\in \R$, compute values
$\phkprapprox(v)$ for vertices $v$.
We give a distributed algorithm which solves the PHKPR problem and finishes
after only $O\left( \Kparam \right)$ rounds of communication
(Theorem~\ref{thm:phkprtime}).

We extend our results to distributed $k$-machine model and show the existence of
an algorithm which computes a local cluster over $k$ machines in
$\tilde{O}\left( \frac{\log(\epsilon^{-1})}{\epsilon^3 k^2
\log\log(\epsilon^{-1})} + \frac{1}{\epsilon k^2} + \left(
\frac{\log(\epsilon^{-1})}{k \log\log(\epsilon^{-1})} +
\frac{1}{k\epsilon}\right)\max\left\{ \frac{1}{\epsilon^{3}}, \Delta \right\}
\right)$ rounds, where $\Delta$ is the maximum degree in the graph, with high
probability (Theorem~\ref{thm:localclusterkmachine}).  We note that when hiding
polylogarithmic factors, this time does not depend on the size $n$ of the graph.
We compare this to an algorithm for computing a local cluster with PageRank
which will require $\tilde{O}\left( \frac{\frac{1}{\alpha} + n}{k^2} +
\left(\frac{1}{\alpha k} + \frac{n}{k} \right)\max\{\frac{1}{\epsilon}, \Delta\}
\right)$ rounds with high probability, which is linear in $n$.  These results
are given in Section~\ref{sec:kmachine}.

We briefly note here that local clustering algorithms can easily be extended to
sparse cut algorithms.  Namely, one can sample a number of random nodes in the
network and perform the local clustering algorithm from each.  One node in the
network can store the Cheeger ratios output by each run of the algorithm and
simply return the minimal Cheeger ratio as the value of the sparsest cut in the
network.  In~\cite{st:graphpartitioning:stoc04,acl:prgraphpartition:focs06},
$O(\frac{n}{\sigma}\log n)$ nodes are enough to compute a sparsest cut with high
probability, where $\sigma$ is the size of the cut set.

\section{Fast Distributed Heat Kernel Pagerank
Computation}\label{sec:distributedphkpr} The idea of the algorithm is to launch
a number of random walks from the seed node in parallel, and compute the
fraction of random walks which end at a node $u$ as an estimate of the PHKPR
values $\phkpr(u)$.  Recall the definition of personalized heat kernel pagerank
from (\ref{eq:hkpr}), $\phkpr = \sum_{k=0}^{\infty} e^{-t}\frac{t^k}{k!} ~\chi_s
P^k$.  Then the values of this vector are exactly the stationary distribution of
a \emph{heat kernel random walk}: with probability $p_k = e^{-t}\frac{t^k}{k!}$,
take $k$ random walk steps according to the standard random walk transition
probabilities $P$ (see Section~\ref{sec:hkpr}).

To be specific, the seed node $s$ initializes $r$ tokens, each of which holds a
random variable $k$ corresponding to the length of its random walk.  Then, in
rounds, the tokens are passed to random neighbors with a count incrementor until
the count reaches $k$.  At the end of the parallel random walks, each node
holding tokens outputs the number of tokens it holds divided by $r$ as an
estimate for its PHKPR value.  Algorithm~\ref{alg:phkpr} describes the full
procedure.

\begin{algorithm}
\floatname{algorithm}{Algorithm}
\caption{\estphkpr}
\label{alg:phkpr}
\textbf{input:} a network modeled by a graph $G$, a seed node $s$, a diffusion
parameter $t$, an error bound $\epsilon$\\
\textbf{output:} estimates $\phkprapprox(v)$ of PHKPR values for nodes $v$ in the network\\

\begin{algorithmic}[1]
    \State seed node $s$ generates $r = \frac{16}{\epsilon^3} \log n$ tokens
    $t_i$
    \State $K \gets c\cdot\Kparam$ for any choice of $c \geq 1$
    \State each token $t_i$ does the following: pick a value $k$ with
    probability $p_k = e^{-t}\frac{t^k}{k!}$, then hold the counter value $k_i \gets
    \min\{k, K\}$
    \For{iterations $j=1 \ldots K$}
        \State every node $v$ performs the following in parallel:
        \For{every token $t_i$ node $v$ currently holds}
            \If{$k_i == j$}
                \State hold on to this token for the duration of the iterations
            \Else
                \State send $t_i$ to a random neighbor
            \EndIf
        \EndFor
    \EndFor
    \State let $C_v$ be the number of tokens node $v$ currently holds
    \State each node with $C_v > 0$ returns $C_v/r$ as an estimate for its PHKPR
    value $\phkpr(v)$
\end{algorithmic}
\end{algorithm}

The algorithm is based on that given in~\cite{cs:hkprcluster:iwoca14} in a
static setting.  Theorem 1 of~\cite{cs:hkprcluster:iwoca14} states that an
$\epsilon$-approximate PHKPR vector can be computed with the above procedure by
setting $r = \frac{16}{\epsilon^3}\log n$.  Further, the approximation guarantee
holds when limiting the maximum length of random walks to $K = O\left( \Kparam
\right)$, so that each token is passed for $\max\{k, K\}$ rounds, where $k$ is
drawn with probability $p_k$ as described above.  In the static setting, this
limit keeps the running time down.

In contrast, the distributed algorithm \estphkpr~takes advantage of
decentralized control to take multiple random walk steps via multiple edges at a
time.  That is, through parallel execution, the running time depends only on the
length of random walks, whereas when running the random walks in serial, as
in~\cite{cs:hkprcluster:iwoca14}, the running time must also include the number
of random walks performed.  Thus, keeping $K$ small is critical in keeping the
number of rounds low, and is the key to the efficiency of our local clustering
algorithms.

The correctness of the algorithm follows directly from Theorem 1
in~\cite{cs:hkprcluster:iwoca14}, and is stated here without proof.  The authors
additionally give empirical evidence of the correctness of the algorithm with
parameters $r = \frac{16}{\epsilon^3} \log n$ and $K =
\frac{2\log(\epsilon^{-1})}{\log\log(\epsilon^{-1})}$ in an extended version of
the paper~\cite{chung2015computing}.  They specifically demonstrate that the
ranking of nodes obtained with an $\epsilon$-approximate PHKPR vector computed
this way is very close to the ranking obtained with an exact vector.

\begin{theorem}
For any network $G$, any seed node $s \in V$, and any error bound $0 <
\epsilon < 1$, the distributed algorithm \estphkpr~outputs an
$\epsilon$-approximate PHKPR vector with probability at least $1-\epsilon$.
\end{theorem}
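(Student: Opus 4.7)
The plan is to reduce the distributed algorithm to a Monte Carlo estimator and then bound two sources of error: the truncation of the Poisson distribution at $K$, and the sampling error from using only $r$ tokens. Since the tokens evolve independently in the network and each token's walk is governed by the same probability law as a sequential simulation, the distributed parallelism does not change the joint distribution of final locations; hence the analysis can be borrowed essentially verbatim from the static Monte Carlo argument of Theorem~1 in \cite{cs:hkprcluster:iwoca14}.

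First I would observe that, if the walk length $k$ were sampled from the exact Poisson distribution with parameter $t$ (without truncation), then the probability that a single token ends at vertex $v$ would be exactly
\[
\sum_{k=0}^{\infty} e^{-t} \frac{t^k}{k!} \,\chi_s P^k(v) \;=\; \phkpr(v),
\]
by the identity in~(\ref{eq:hkpr}). Thus the idealized estimator $C_v / r$ is unbiased. Next I would bound the bias introduced by capping $k$ at $K = \Theta(\log(\epsilon^{-1})/\log\log(\epsilon^{-1}))$. Using Stirling's approximation on the Poisson tail, one shows $\sum_{k > K} e^{-t} t^k / k! \leq \epsilon/2$ for the prescribed choice of $K$, which shifts the expected token mass at each vertex by at most $\epsilon/2$ in an additive sense.

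Then I would apply a Chernoff bound to the $r$ independent token outcomes. For any vertex $v$, the count $C_v$ is a sum of $r$ independent Bernoulli variables with success probability close to $\phkpr(v)$. Choosing $r = (16/\epsilon^3) \log n$ and combining the multiplicative Chernoff bound (for vertices with $\phkpr(v)$ not too small) with the additive slack $\epsilon$ in the approximation definition (for vertices with small $\phkpr(v)$) yields
\[
(1-\epsilon)\phkpr(v) - \epsilon \;\leq\; \phkprapprox(v) \;\leq\; (1+\epsilon)\phkpr(v),
\]
with probability at least $1 - n^{-c}$ at each individual vertex; a union bound over the $n$ vertices then produces the claimed global success probability. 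The second condition, that $\phkprapprox(v) = 0$ implies $\phkpr(v) \leq \epsilon$, follows from the same Chernoff argument: if the true value exceeded $\epsilon$ then with overwhelming probability at least one token would land at $v$.

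The main obstacle is properly reconciling the additive and multiplicative error terms in the approximation definition with the truncation bias, and verifying that the CONGEST-style parallel token passing does not perturb the marginal distribution of each token's endpoint (this relies on the tokens making independent local random choices, which the algorithm ensures because each node independently picks a uniformly random neighbor per token per round). Once these pieces are assembled the result follows, so a direct citation to \cite{cs:hkprcluster:iwoca14} suffices as the authors indicate.
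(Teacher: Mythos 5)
Your proposal takes essentially the same route as the paper: the paper offers no proof at all, simply invoking Theorem~1 of \cite{cs:hkprcluster:iwoca14} after observing that the parallel token-passing realizes exactly the same truncated heat kernel random walk sampler as the static algorithm, which is precisely the reduction (parallelism preserves each token's endpoint law, then cite) that you make. One caution about your sketch of the cited proof's internals: the claimed tail bound $\sum_{k>K}e^{-t}t^k/k!\le\epsilon/2$ with $K=\Theta\bigl(\log(\epsilon^{-1})/\log\log(\epsilon^{-1})\bigr)$ cannot hold for arbitrary $t$ (for $t\gg K$ the Poisson mass beyond $K$ is close to $1$), so the truncation error must be handled as in the cited analysis rather than by this Stirling estimate; since you, like the authors, ultimately defer to that citation, this does not change the verdict.
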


The correctness of the algorithm holds for any choice of $t$, and in fact we use
a particular value of $t$ in our local clustering algorithm (see
Section~\ref{sec:localcluster}).  Regardless, it is clear that the running time
is independent of any choice of $t$.  In fact, we demonstrate in the proof of
Theorem~\ref{thm:phkprtime} that it is independent of $n$ as well.

\begin{theorem}\label{thm:phkprtime}
For any network $G$, any seed node $s \in V$, and any error bound $0 < \epsilon
< 1$, the distributed algorithm \estphkpr~finishes in $O\left(\Kparam\right)$
rounds.
\end{theorem}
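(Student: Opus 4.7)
The plan is to read the running time directly off the structure of Algorithm~\ref{alg:phkpr} and argue that each of its iterations costs $O(1)$ rounds of CONGEST communication, so that the total round count is dominated by the number of outer iterations.

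First, I would note that the preprocessing (lines 1--3) is entirely local to the seed $s$: generating $r = \frac{16}{\epsilon^3}\log n$ tokens and sampling counter values $k_i$ with probabilities $p_k = e^{-t}t^k/k!$ requires no communication. The for-loop in lines 4--14 runs exactly $K = c\cdot\Kparam$ times, and the final output (lines 15--16) is again local. So the entire communication cost lies in the $K$ iterations.

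Next, I would argue that iteration $j$ can be executed in $O(1)$ rounds of CONGEST. The only communication required is that each node $v$ which currently holds a token with $k_i > j$ must forward that token to a uniformly random neighbor. The state associated with a token is just the single integer $k_i \in \{1,\dots,K\}$, which can be represented in $O(\log K) = O(\log\log(\epsilon^{-1}))$ bits. The main obstacle is that, a priori, many tokens might want to cross the same edge in the same round, and the CONGEST model only permits $O(\log n)$ bits per edge per round. The key observation is that tokens are \emph{indistinguishable} except for their counter values: whenever $v$ sends several tokens to a single neighbor $u$, it suffices to transmit, for each value $k' \in \{j+1,\dots,K\}$, a single count of how many tokens with $k_i = k'$ are moving along the edge $(v,u)$. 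Since the total number of tokens in the system is $r = O(\epsilon^{-3}\log n) = \poly(n)$, each such count fits in $O(\log n)$ bits; the whole histogram fits in $O(K\log n)$ bits, which can be pipelined across $(v,u)$ in $O(K/\log n \cdot \log n) = O(1)$ rounds whenever $K = O(\log n)$ (which holds for any non-trivially small $\epsilon$ with respect to $n$). In the regime where this is not the case, one can replace the token-level representation by a degree-$K$ histogram message from the outset, so that sending a complete histogram to each neighbor costs $O(1)$ CONGEST rounds.

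Combining the two observations, the algorithm performs $K$ outer iterations, each of which is implementable in $O(1)$ CONGEST rounds, giving a total of $O(K) = O\!\left(\Kparam\right)$ rounds. I would emphasize in the write-up that this bound is independent of both $n$ and the diffusion parameter $t$: $n$ enters only through the number of tokens $r$, which affects success probability and the bit-width of counts but not the number of rounds, and $t$ affects only the distribution over $k$ (truncated at $K$), not the outer loop count. The main delicate point — and the only step requiring more than a one-line argument — is the aggregation argument that handles congestion, which I would state explicitly as a lemma before invoking it.
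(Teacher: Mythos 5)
Your proposal follows the same route as the paper's proof: all communication happens inside the $K = O\left(\Kparam\right)$ iterations of the token-passing loop, the preprocessing and output steps are purely local, and the theorem reduces to showing that each random-walk step costs $O(1)$ rounds, so that the round complexity is independent of $t$ and of $n$. Where you differ is in how per-edge congestion is handled. The paper argues only that a token carries no node IDs, just its counter, hence is a constant-size message, and that even the worst case in which all $r = O\left(\rparam\right)$ tokens cross a single edge in one iteration amounts to $O\left(\rparam\right)$ bits, which it simply declares to meet the constraints of the model and hence to fit in one round. Your aggregation idea --- tokens are interchangeable up to their counter value, so a node need only send each neighbor a histogram of counts indexed by counter --- is a genuinely different and more scrupulous treatment of the same point, and it is the honest way to keep per-edge traffic small without implicitly assuming a generous bandwidth.

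That said, the arithmetic in your congestion step is off: a histogram of up to $K$ counts, each of $O(\log n)$ bits, is $O(K\log n)$ bits, and over an $O(\log n)$-bit link this takes $O(K)$ rounds, not $O(1)$; the condition $K = O(\log n)$ does not rescue the computation, and your fallback of sending a full degree-$K$ histogram as a single message silently assumes $K\log r = O(\log n)$. As written, the argument gives $O(K^2)$ rounds in the regime of very small $\epsilon$. The slip is repairable: since the walk lengths are i.i.d.\ from a known distribution, defer the stopping decision --- each token still alive at iteration $j$ halts at its current node with probability $p_j/\sum_{k\ge j}p_k$ (forcing a halt at step $K$) --- so that in-flight tokens are completely indistinguishable and each edge carries a single count of at most $r$ per iteration, i.e.\ $O(\log n)$ bits whenever $1/\epsilon = \poly(n)$, which genuinely yields one round per step. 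Note that the paper's own proof rests on the stronger unstated assumption that $O\left(\rparam\right)$ bits cross an edge in one round, so your argument, once patched in this way, is actually tighter than the published one.
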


\begin{proof}
We show that there is no congestion in the network during any round of the
algorithm; i.e., there are never more than $O(\log n)$ bits sent over any
edge in any iteration of the random walk process.  The proof then follows since
each step of the random walk requires only one round of computation.

In any run of the algorithm, $\frac{16}{\epsilon^3}\log n$ tokens are created,
each holding a message $k_i$ corresponding to a random walk length.  The token
contains no other information.  In particular, no node IDs are transmitted
through the tokens.  Therefore passing a token involves sending a message of
constant size in any iteration of the algorithm.  In the worst case, every token
is transmitted through a single edge in a single iteration of the algorithm.
However, this is still only $O(\rparam)$ bits, and so meets the constraints of
the model.  Namely, even the worst case of sending every token over one edge can
be done with a single round of communication.  Therefore any random walk step
requires only one round of communication, and by construction at most
$O\left(\Kparam\right)$ random walk steps are performed in the algorithm.
\end{proof}

\section{Distributed Local Cluster Detection}\label{sec:localcluster}
In this section we present a fast, distributed algorithm for the local
clustering problem.  The backbone of the algorithm involves investigating sets
of nodes which accumulate in decreasing order of their
$\phkprapprox(v)/d_v$ values.  The process is efficient and requires at
most one linear scan of the nodes in the network (we actually show that the
process can be much faster).

We describe the algorithm presently.  Let $p$ be any function over the nodes of
the graph, and let $\pi$ be the ordering of the nodes in decreasing order of
$p(v)/d_v$.  Then the majority of the work of the algorithm is investigating
sufficiently many of the $n-1$ cuts $(S_j, \bar{S}_j)$ given by the first $j$
nodes in the ordering and the last $n-j$ nodes in the ordering, respectively,
for $j = 1, \ldots, n-1$.  However, by ``sufficiently many'' we indicate that we
may stop investigating the cut sets when either the volume or the size of the
set $S_j$ is large.  Assume this point is after $j = \b{j}$.  Then we choose the
cut set that yields the minimum Cheeger ratio among the $\b{j}$ possible cut sets.
We call this process a \emph{sweep}.  As such, our local clustering algorithm is
a \emph{sweep} of a PHKPR distribution vector.

\begin{algorithm}
\floatname{algorithm}{Algorithm}
\caption{\alglocalcluster}
\label{alg:localcluster}
\textbf{input:} a network modeled by a graph $G$, a seed node $s$, a target
cluster size $\sigma$, a target cluster volume $\varsigma$, an optimal Cheeger
ratio $\phi$, an error bound $\epsilon$\\
\textbf{output:} a set of nodes $S$ with $\Phi(S) \in O(\sqrt{\phi})$\\

\begin{algorithmic}[1]
    \State $t \gets \tparam$
    \State compute PHKPR values $\phkprapprox(v)$ with \estphkpr($G, s, t, \epsilon)$\label{line:phkpr}
    \State every node $v$ with a non-zero PHKPR value estimate sends
    $\phkprapprox(v)/d_v$ to every other node with a non-zero PHKPR value
    estimate\label{line:phase11}
    \Comment{\emph{\textbf{Phase 1}}}
    \State let $\pi$ be the ordering of nodes in decreasing order of
    $\phkprapprox(v)/d_v$\label{line:phase12}
    \Comment{\emph{\textbf{Phase 1}}}
    \State compute Cheeger ratios of each of the cut sets
    with a call of the \textbf{Distributed Sweep Algorithm} and output the cut
    set of minimum Cheeger ratio\label{line:phase2}
    \Comment{\emph{\textbf{Phase 2}}}
\end{algorithmic}
\end{algorithm}

In the static setting, this process will take $O(n \log n)$ time in general.
The authors in~\cite{dassarma:sparsecut:dc} give a distributed sweep algorithm
that finishes in $O(n)$ rounds.  We improve the analysis
of~\cite{dassarma:sparsecut:dc} using a PHKPR vector.  The running time of our
sweep algorithm is given in Lemma~\ref{lemma:sweep}.

The sweep involves two phases.  In Phase 1, the goal is for each node to know
its place in the ordering $\pi$.  Each node can compute their own
$\phkprapprox(v)/d_v$ value locally, and we use $O(\frac{1}{\epsilon})$ rounds
to ensure each node knows the $\pi$ values of all other nodes (see the proof of
Lemma~\ref{lemma:sweep}).  In Phase 2, we use the decentralized sweep
of~\cite{dassarma:sparsecut:dc} described presently:

\paragraph{\textbf{Distributed Sweep Algorithm.}} Let $N$
denote the number of nodes with a non-zero estimated PHKPR value after running
the algorithm~\estphkpr.  Assume each node knows its position in ordering $\pi$
after Phase 1.  We will refer to nodes by their place in the ordering.  Define
$S_j$ to be the cut set of the first $j$ nodes in the ordering.  Then computing
the Cheeger ratio of each cut set $S_j$ involves a computation of the volume of
the set as well as $|E(S_j, \bar{S}_j)|$.  Define the following:
\begin{itemize}
\item $L_j^{\pi}$ is the number of neighbors of node $j$ in $S_{j-1}$, and
\item $R_j^{\pi}$ is the number of neighbors of node $j$ in $\bar{S}_j$.
\end{itemize}
Then the Cheeger ratio of each cut set can be computed locally by:
\begin{align}
&\circ|E(S_j, \bar{S}_j)| = |E(S_{j-1}, \bar{S}_{j-1})| - L_j^{\pi} +
R_j^{\pi}, \mbox{ with } |E(S_1, \bar{S}_1)| = d_1\label{eq:recursiveCheeg1}\\
&\circ\vol(S_j) = \vol(S_{j-1}) + L_j^{\pi} + R_j^{\pi}, \mbox{ with } vol(S_1) =
d_1.\label{eq:recursiveCheeg2}
\end{align}

We now show that a sweep can be performed in $O(N)$ rounds.  Each node knows the
IDS of its neighbors and after Phase 1 each node knows the place of every other
node in the ordering $\pi$.  Therefore, each node can compute locally if a
neighbor is in $S_{j-1}$ or $\bar{S}_j$, and so $L_j^{\pi}$ and $R_j^{\pi}$ can
be computed locally for each node $j$.  Each node can then prepare an $O(\log
n)$-bit message of the form $(\mbox{ID}, L_j^{\pi}, R_j^{\pi})$.  Each of the
$N$ messages of this form can then be sent to the first node in the ordering
using the upcasting algorithm (described in the proof of
Lemma~\ref{lemma:sweep}) using the $\pi$ ordering as node rank.  We note that the
$N$ nodes in the ordering are necessarily in a connected component of the
network, and so the upcasting procedure can be performed in $O(N)$ rounds.
Finally, once the first node in the ordering is in possession of the ordering
$\pi$, and the values of $(L_j^{\pi}, R_j^{\pi})$ for every node in the
ordering, it may iteratively compute $\Phi(S_j)$ locally using the rules
(\ref{eq:recursiveCheeg1}) and (\ref{eq:recursiveCheeg2}).  Thus, this node can
output the minimum Cheeger ratio $\phi^*$ as well as the $j^*$ such that
$\Phi(S_{j^*}) = \phi^*$ after $O(N)$ rounds.

\begin{lemma}\label{lemma:sweep}
Performing Phases 1 and 2 of a distributed sweep takes
$O(\frac{1}{\epsilon})$ rounds.
\end{lemma}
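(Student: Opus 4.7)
The plan is to argue that both phases run in time dominated by $N$, the count of nodes participating in the sweep, and to show $N = O(1/\epsilon)$.

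First I would establish the bound $N = O(1/\epsilon)$. Summing the upper bound from the definition of an $\epsilon$-approximate PHKPR vector gives $\sum_v \phkprapprox(v) \le (1+\epsilon)\sum_v \phkpr(v) = 1+\epsilon$, since $\phkpr$ is a probability distribution (its defining matrix $H_t$ is row-stochastic and the seed $\chi_s$ has mass $1$). Because the sweep is only meaningfully influenced by nodes whose estimate exceeds $\epsilon$ (the rest carry negligible mass relative to the targeted set, per the local-cluster theorem), attention may be restricted to such nodes, and Markov's inequality together with the sum bound forces at most $(1+\epsilon)/\epsilon = O(1/\epsilon)$ of them.

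Next I would analyze Phase 1. Every participating node lies within a ball of radius $K = O(\Kparam)$ around the seed $s$, because each token in \estphkpr{} traverses at most $K$ edges. I would piggyback a BFS tree rooted at $s$ onto the token-dispersal process at no extra asymptotic cost, giving a tree of depth $O(K)$ that covers every participating node. Using standard pipelined upcasting on this tree, the $N$ messages $(\mathrm{ID},\phkprapprox(v)/d_v)$ are collected at $s$ in $O(N+K)$ rounds; $s$ sorts them locally into $\pi$ and downcasts ranks in another $O(N+K)$ rounds. Since $K=O(\Kparam)=o(1/\epsilon)$ and $N = O(1/\epsilon)$, Phase 1 finishes in $O(1/\epsilon)$ rounds.

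For Phase 2 I would invoke the analysis of the Distributed Sweep Algorithm given immediately before the lemma: each participating node prepares its triple $(\mathrm{ID},L_j^{\pi},R_j^{\pi})$ locally from its neighbors' ranks, these $N$ triples are pipelined up to the top-ranked node in $O(N)=O(1/\epsilon)$ rounds, and that node then sequentially evaluates Cheeger ratios via the recursions (\ref{eq:recursiveCheeg1}) and (\ref{eq:recursiveCheeg2}). Summing Phases 1 and 2 yields the claimed $O(1/\epsilon)$ bound.

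The delicate step is the bound $N = O(1/\epsilon)$. The approximation definition alone only guarantees $\phkprapprox(v)=0$ when $\phkpr(v)\le\epsilon$; a single stray token can produce $\phkprapprox(v) = 1/r = \Theta(\epsilon^{3}/\log n)$, potentially inflating the raw count of strictly positive estimates to $O(\log n/\epsilon^{3})$. The main obstacle is therefore to show that truncating the estimates at the threshold $\epsilon$ (and dropping these low-weight nodes from the sweep) still preserves the $O(\sqrt{\phi})$ Cheeger guarantee of the output cluster, so that $N=O(1/\epsilon)$ really does govern the round complexity.
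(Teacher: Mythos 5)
Your round-complexity machinery matches the paper's: both phases reduce to upcasting/flooding $N$ messages of $O(\log n)$ bits over a BFS tree in $O(N)$ rounds (your extra $O(K)$ depth term from rooting at $s$ is harmless since $K = O\left(\Kparam\right) = o(1/\epsilon)$), and the whole lemma rests on the single counting bound $N = O(1/\epsilon)$. The problem is that you leave exactly that bound unproved: you restrict attention to nodes with estimates exceeding $\epsilon$, concede that the algorithm as written can leave up to $r = \Theta(\epsilon^{-3}\log n)$ nodes holding a stray token and hence a nonzero estimate, and then declare the justification of the truncation (preservation of the $O(\sqrt{\phi})$ guarantee) an open obstacle. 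As a proof of the lemma, that is a genuine gap --- the statement being proved is precisely that the sweep costs $O(1/\epsilon)$ rounds, and without the bound on the number of participating nodes you have not established it.

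For comparison, the paper's own proof disposes of this step in one line, by the very argument you distrust: since $\sum_{v}\phkpr(v)=1$, at most $1/\epsilon$ vertices can have exact value at least $\epsilon$, and $N$ is identified with that count; the question of whether sweeping only over such a truncated support still returns an $O(\sqrt{\phi})$ cut is not addressed inside this lemma at all, but delegated to the correctness theorem imported from the static heat kernel clustering paper of Chung and Simpson. So your instinct that the raw support of the estimate can be much larger than $1/\epsilon$ is a legitimate criticism of that identification, and your Markov-style count over the estimates (using $\sum_v \phkprapprox(v) \le 1+\epsilon$) is if anything more careful than the paper's. But the expected completion here is short: zero out estimates below $\epsilon$ --- which, by the lower bound in the approximation definition, only discards nodes with $\phkpr(v) = O(\epsilon)$, so the result is still an $O(\epsilon)$-approximate PHKPR vector in the paper's sense --- and then invoke the cited correctness theorem for the quality of the swept cut, rather than attempt a new Cheeger-preservation argument. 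Since you name this as an unresolved obstacle instead of closing it by either route, the key bound $N = O(1/\epsilon)$, and hence the lemma, remains unproved in your write-up.
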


\begin{proof}
First we describe how to send $N$ $O(\log n)$-sized messages to a single node in
$O(N)$ rounds of communication.  For this we can use the upcasting algorithm
of~\cite{peleg:congest} (as described in~\cite{dassarma:sparsecut:dc}).  We
first construct a priority BFS tree of the $N$ nodes with non-zero PHKPR value.
We emphasize again that these nodes are necessarily in a connected component of
the network, and it is shown in~\cite{peleg:congest} that such a BFS tree can be
constructed in $O(N)$ time.  Each node in the tree then upcasts its message to
the root node through the edges of the tree.

In Phase 1, the nodes need to be sorted according to their (non-zero) $\pi$
values.  In this case, the nodes use their $\phkprapprox(v)/d_v$ value as their
rank so that the node with the highest $\phkprapprox(v)/d_v$ value is the root
of the tree.  Then each node upcasts its $\phkprapprox(v)/d_v$ value to the root
through the edges of the tree.  The root node locally sorts these values and
then floods all the $\pi$ values to the nodes through tree edges.  The upcast
and flooding process take $O(N)$ rounds to reach each of the nodes in the tree.

Phase 2 consists of the \textbf{Distributed Sweep Algorithm}, where the first
node in the ordering computes the Cheeger ratio of $S_j$ for each node $j$ in
the ordering.  In order to send each of the $(\mbox{ID}, L_j^{\pi}, R_j^{\pi})$
messages to the first node of the ordering we again upcast through the edges of
a priority BFS tree, however in this round we use $\pi$ values as node rank.
The root node is then able to locally compute Cheeger ratios and output the
cutset of minimum Cheeger ratio after $O(N)$ rounds for upcasting.

Thus Phase 1 requires $O(N)$ rounds for upcasting and flooding values.  Phase 2
requires $O(N)$ rounds for upcasting values necessary for locally computing
Cheeger ratios.  Since we compute an $\epsilon$-approximate PHKPR vector as our
distribution, we know that $N$ is no more than $O(\frac{1}{\epsilon})$.  This is
because we assume $\sum_{v\in V}\phkpr(v) = 1$, and so no more than
$\frac{1}{\epsilon}$ vertices can have values at least $\epsilon$.  Thus the
full sweep takes $O(\frac{1}{\epsilon})$ rounds.
\end{proof}

We note here that the time required for the sweep may be reduced if there are
size or volume restraints for the cut set.  In this case, an alternative
distributed sweep algorithm may be utilized.  As usual, we refer to each node by
their place in the ordering $\pi$.  Node 1 begins the sweep by sending
$\vol(S_1), |E(S_1, \bar{S}_1)|$ to node 2.  Then nodes $j = 2, \ldots, N$
iteratively compute $\Phi(S_j)$ using the values of $\vol(S_{j-1}), |E(S_{j-1},
\bar{S}_{j-1})|, L_j^{\pi}$ and $R_j^{\pi}$, and then subsequently sending
$\vol(S_j), |E(S_j, \bar{S}_j)|$ to the next node $j+1$ in the ordering.
Additionally, each node can send the minimum Cheeger ratio $\phi^*$ computed
thus far as well as the $j^*$ such that $\Phi(S_{j^*}) = \phi^*$.  Thus the last
node in the ordering can output $S_{j^*}$.  In this algorithm, each iteration
$j$ will require $d$ rounds of communication, were $d$ is the shortest path
distance between nodes $j-1$ and $j$.  However, no two nodes will ever be at a
distance of greater than $O(\Kparam)$ steps by construction.  In this way, the
first $j$ Cheeger ratios can be computed in $O(j(\Kparam)$ rounds.

If size or volume restraints are placed on the cluster, we may stop the sweep at
node $\b{j}$ when the size or volume of $S_{\b{j}}$ is greater than some
specified value.  We output the set $S_{j^*}$ for that iteration, and this
process requires $O(\b{j}(\Kparam))$ rounds.

\bigskip
The algorithm \alglocalcluster~(Algorithm~\ref{alg:localcluster}) is a complete
description of our distributed local clustering algorithm.  The correctness of
the algorithm follows directly from~\cite{cs:hkprcluster:iwoca14} and we omit
the proof here.

\begin{theorem}
For any network $G$, suppose there is a set of Cheeger ratio $\phi$.  Then at
least half of the vertices in $S$ can serve as the seed $s$ so that for any
error bound $0 < \epsilon < 1$,  the algorithm \alglocalcluster~will find a set
of Cheeger ratio $O(\sqrt{\phi})$ with probability at least $1-\epsilon$.
\end{theorem}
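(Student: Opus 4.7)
The plan is to reduce the claim to the static heat kernel pagerank analysis of~\cite{cs:hkprcluster:iwoca14}, combined with the approximation guarantee of Algorithm~\ref{alg:phkpr} and the fidelity of the distributed sweep established in Lemma~\ref{lemma:sweep}. The argument has three ingredients: a good-seed lemma for heat kernel pagerank, a Cheeger-type sweep inequality for the exact vector $\phkpr$, and a stability-under-approximation step that transfers the sweep guarantee from $\phkpr$ to $\phkprapprox$.

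First I would invoke the good-seed lemma. Given a set $S$ with $\Phi(S)=\phi$, a standard escape-probability bound for the heat kernel random walk shows that for $t = \tparam$ the mass that leaks out of $S$ within time $t$ is at most a constant times $t\phi$. Averaging over seeds in $S$ weighted by degree and applying Markov's inequality yields a subset $S_{\mathrm{good}}\subseteq S$ containing at least half of the vertices of $S$, each of which concentrates enough mass of $\phkpr$ on $S$ to qualify as a good seed. For any such seed, the Cheeger-type sweep inequality for heat kernel pagerank from~\cite{cs:hkprcluster:iwoca14} guarantees that the sweep over $\phkpr$ produces a set of Cheeger ratio $O(\sqrt{\phi})$; the extra logarithmic factor seen in the PageRank analogue is saved precisely because the diffusion parameter $t$ is tuned to the spectral scale $\phi^{-1}$ of the cut.

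The next step is to lift this guarantee from the exact vector $\phkpr$ to the approximate vector $\phkprapprox$ returned by \estphkpr. By the earlier correctness theorem for \estphkpr, with probability at least $1-\epsilon$ the vector $\phkprapprox$ is an $\epsilon$-approximate PHKPR vector, satisfying both the two-sided multiplicative bound and the additive $\epsilon$ slack. One then shows that the ordering of nodes by $\phkprapprox(v)/d_v$ and the induced sequence of cuts are close enough to those produced from $\phkpr$ that the minimum Cheeger ratio encountered along the sweep changes by at most a constant factor: the $(1\pm\epsilon)$ closeness perturbs each swept Cheeger ratio by a constant, and the additive slack can only affect vertices whose exact PHKPR value is at most $\epsilon$, which cannot move the argmin across the threshold at which $O(\sqrt{\phi})$ is attained. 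This robustness step is the main technical obstacle I anticipate; it is exactly the argument carried out in the static analysis of~\cite{cs:hkprcluster:iwoca14}, so here I would import it rather than redo it.

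Finally, Algorithm~\ref{alg:localcluster} executes precisely the sweep of the static analysis: by Lemma~\ref{lemma:sweep}, Phases~1 and~2 reproduce the ordering and the cumulative volume and boundary bookkeeping faithfully, so the cutset returned in the distributed setting agrees with the one the static sweep would return on the same $\phkprapprox$. Combining the three ingredients, every seed in $S_{\mathrm{good}}$ yields a cluster of Cheeger ratio $O(\sqrt{\phi})$ with probability at least $1-\epsilon$, the failure probability coming entirely from \estphkpr. Since $S_{\mathrm{good}}$ contains at least half of the vertices of $S$, the theorem follows.
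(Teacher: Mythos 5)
Your proposal follows essentially the same route as the paper, which proves this theorem simply by deferring to the static heat kernel pagerank analysis of~\cite{cs:hkprcluster:iwoca14} (good seeds, the sweep Cheeger inequality, and robustness under $\epsilon$-approximation) and noting that the distributed algorithm reproduces that sweep; the paper in fact omits the proof entirely and cites that work. Your sketch just makes the imported ingredients explicit, so it is consistent with the paper's intended argument.
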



\begin{theorem}\label{thm:localclusterphi}
For any network $G$, any seed node $s\in V$, and any error bound $0 < \epsilon <
1$, the algorithm \alglocalcluster~finishes in $O\left(\Kparam +
\frac{1}{\epsilon}\right)$ rounds.
\end{theorem}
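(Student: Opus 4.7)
The plan is to observe that Algorithm~\ref{alg:localcluster} naturally decomposes into two stages whose round complexities have already been established earlier in the paper, so the bound follows by simple addition.

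First I would account for the PHKPR computation on line~\ref{line:phkpr}. This is a single invocation of \estphkpr\ on the input $(G,s,t,\epsilon)$, and Theorem~\ref{thm:phkprtime} gives that this finishes in $O(\Kparam)$ rounds, independent of both $n$ and the choice of the diffusion parameter $t$. In particular, the specific choice $t \gets \tparam$ on line~1 does not affect the asymptotic round complexity of this step.

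Next I would account for the sweep, which is exactly Phase~1 (lines~\ref{line:phase11}--\ref{line:phase12}) together with Phase~2 (line~\ref{line:phase2}). These are precisely the two phases analyzed in Lemma~\ref{lemma:sweep}, and that lemma bounds the combined cost at $O(1/\epsilon)$ rounds. The argument there goes through two upcast/flood operations on a priority BFS tree over the $N$ nodes with non-zero $\phkprapprox$ value, and observes that since $\phkpr$ is a probability distribution, the approximation guarantee forces $N = O(1/\epsilon)$.

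Summing the two contributions gives the claimed bound $O(\Kparam + \frac{1}{\epsilon})$. There is no real obstacle; the only things worth remarking are that the two stages execute sequentially (so addition, not maximum, is what we need — though either would give the same asymptotic expression), and that any local per-node work — computing $\phkprapprox(v)/d_v$, determining $L_j^\pi,R_j^\pi$ from known neighbor IDs and $\pi$-positions, and the root node's local Cheeger ratio computation via \eqref{eq:recursiveCheeg1}--\eqref{eq:recursiveCheeg2} — contributes no rounds of communication and so does not enter the bound.
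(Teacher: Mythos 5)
Your proposal is correct and matches the paper's own proof: both decompose the algorithm into the \estphkpr\ call, bounded by $O\big(\Kparam\big)$ rounds via Theorem~\ref{thm:phkprtime}, and the two sweep phases, bounded by $O(1/\epsilon)$ rounds via Lemma~\ref{lemma:sweep}, and then add the contributions. Your added remarks about the irrelevance of the choice of $t$ and of purely local computation are consistent with (and slightly more explicit than) the paper's argument.
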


\begin{proof}
The only distributed computations are those for computing approximate PHKPR
values (line~\ref{line:phkpr}) and Phase 1 (lines~\ref{line:phase11}
and~\ref{line:phase12}) and Phase 2 (line~\ref{line:phase2}) of the distributed
sweep.  Computing PHKPR values takes $O\left( \Kparam \right)$ rounds by
Theorem~\ref{thm:phkprtime}, and Phases 1 and 2 together take
$O(\frac{1}{\epsilon})$ rounds by Lemma~\ref{lemma:sweep}.  Thus the running
time follows.
\end{proof}

One possible concern with the algorithm \alglocalcluster~is that one cannot
guarantee knowing the value of $\phi$ ahead of time for any particular node $s$.
Therefore a true local clustering algorithm should be able to proceed without
this information.  This can be achieved by ``testing'' a few values of $\phi$
(and fixing some reasonable values for $\sigma$ and $\varsigma$).  Namely, begin
with $\phi = 1/2$ and run the algorithm above.  If the output cut set $S$
satisfies $\Phi(S) \in O(\sqrt{\phi})$, we are done.  If not, halve the value of
$\phi$ and continue.  There are $O(\log n)$ such guesses, and we have arrived at
the following.

\begin{theorem}\label{thm:localcluster}
For any network $G$, any node $s$, and any error bound $0 < \epsilon < 1$,
there is a distributed algorithm that computes a set $S$ with Cheeger ratio
within a quadratic of the optimal which finishes in $O\left(
\localclustercomplexity \right)$ rounds.
\end{theorem}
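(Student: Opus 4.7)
The plan is to wrap Algorithm~\alglocalcluster\ in a halving loop over $\phi$, exactly as sketched in the paragraph preceding the theorem statement. First I would fix reasonable default values for the size and volume targets $\sigma$ and $\varsigma$, then start with $\phi_0 = 1/2$ and, at iteration $i$, invoke \alglocalcluster$(G, s, \sigma, \varsigma, \phi_i, \epsilon)$. After Phase~2 the root of the priority BFS tree already holds $\Phi(S)$ for the candidate set, so it can locally test whether $\Phi(S) = O(\sqrt{\phi_i})$; if the test succeeds, it broadcasts ``halt'' along the tree and the algorithm returns $S$; otherwise it broadcasts ``continue'' to the seed $s$, which sets $\phi_{i+1} = \phi_i/2$, updates $t = \tparam$, and reinitiates the whole procedure.

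To bound the number of guesses, observe that in a connected graph every non-trivial cut satisfies $\Phi(S) \geq 1/\vol(G) \geq 1/n^2$, so the optimal local Cheeger ratio $\phi$ is at least $1/n^2$. Halving from $1/2$ therefore produces some $\phi_i \leq \phi$ within at most $O(\log n)$ iterations. Once this occurs, the correctness statement preceding Theorem~\ref{thm:localclusterphi} guarantees that \alglocalcluster\ returns a set of Cheeger ratio $O(\sqrt{\phi_i}) \subseteq O(\sqrt{\phi})$ with probability at least $1 - \epsilon$, so the loop terminates by that iteration with high probability.

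For the round count, I would multiply the per-call cost of $O\!\left(\Kparam + \frac{1}{\epsilon}\right)$ rounds from Theorem~\ref{thm:localclusterphi} by the $O(\log n)$ guesses. This yields $O\!\left(\localclustercomplexity\right)$ rounds, matching the claimed bound, since the halt/continue broadcast between guesses traverses a tree of depth $O(\Kparam)$ and is absorbed into the per-iteration budget.

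The main obstacle I expect to confront is the handoff between successive guesses: the seed $s$ must cleanly restart \estphkpr\ with the new diffusion parameter, the set of participating nodes may shift between iterations, and the accept/reject test must be checked against the correct $\phi_i$. I would resolve these by piggybacking the current $\phi_i$ on the halt/continue broadcast so the test is unambiguous, and by relying on the fact that each new call of \estphkpr\ begins with $s$ generating a fresh batch of tokens, so no state from earlier iterations needs to persist.
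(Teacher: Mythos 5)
Your proposal follows essentially the same route as the paper: the paper's argument for Theorem~\ref{thm:localcluster} is exactly the halving loop over guesses of $\phi$ described in the preceding paragraph, with $O(\log n)$ guesses each costing $O\left(\Kparam + \frac{1}{\epsilon}\right)$ rounds by Theorem~\ref{thm:localclusterphi}. Your added justifications (the $1/\vol(G)$ lower bound giving $O(\log n)$ guesses, and the implementation of the halt/continue handoff) are consistent elaborations of details the paper leaves implicit.
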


In particular, when ignoring polylogarithmic factors, the running time is
$\tilde{O}\left(\Kparam + \frac{1}{\epsilon}\right)$.

\section{Computing Local Clusters in the $k$-Machine Model}\label{sec:kmachine}
In this section we consider a graph on $n$ vertices which is distributed across
$k$ nodes in a computer network.  This is the $k$-machine model introduced
in Section~\ref{sec:models}.

In the $k$-machine model, we consider a network of $k > 1$ distinct machines
that are pairwise interconnected by bidirectional point-to-point communication
links.  Each machine executes an instance of a distributed algorithm.  The
computation advances in rounds where, in each round, machines can exchange
messages through their communication links.  We again assume that each link has
a bandwidth of $O(\log n)$ meaning that $O(\log n)$ bits may be transmitted
through a link in any round.  We also assume no shared memory and no other means
of communication between nodes.  When we say an algorithm solves a problem in
$x$ rounds, we mean that $x$ is the maximum number of rounds until termination
of the algorithm, over all $n$-node, $m$-edge graphs $G$.

In this model we are solving massive graph problems in which the vertices of the
graph are distributed among the $k$ machines.  We assume $n \geq k$ (typically
$n \gg k$).  Initially the entire graph is not known by a single machine but
rather partitioned among the $k$ machines in a ``balanced'' fashion so that the
nodes and/or edges are partitioned approximately evenly among the machines.
There are several ways of partitioning vertices, and we will consider a random
partition, where vertices and incident edges are randomly assigned to machines.
Formally, each vertex $v$ of $G$ is assigned independently and randomly to one
of the $k$ machines, which we call the home machine of $v$.  The home machine of
$v$ thereafter knows the ID of $v$ as well as the IDs and home machines of
neighbors of $v$.

In the remainder of this section we prove the existence of efficient algorithms for
solving the PHKPR and local cluster problems in the $k$-machine model.  Our main
tool is the Conversion Theorem of~\cite{klauck2015distributed}.

Define $M$ as the \emph{message complexity}, the worst case number of messages
sent in total during a run of the algorithm.  Also define $C$ as the
\emph{communication degree complexity}, or the maximum number of messages sent
or received by any node in any round of the algorithm.  Then we use as a key
tool the Conversion Theorem as restated below.

\begin{theorem}[Conversion Theorem~\cite{klauck2015distributed}]
Suppose there is an algorithm $A_C$ that solves problem $P$ in the CONGEST model
for any $n$-node graph $G$ with probability at least $1-\epsilon$ in time
$T_C(n)$.  Further, let $A_C$ use message complexity $M$ and communication
degree complexity $C$.  Then there exists an algorithm $A_k$ that solves $P$ for
any $n$-node graph $G$ with probability at least $1-\epsilon$ in the $k$-machine
model in $\tilde{O}\left( \frac{M}{k^2} + \frac{T_C(n)C}{k}\right)$ rounds with
high probability.
\end{theorem}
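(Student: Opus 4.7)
Since the theorem is quoted from Klauck et al., I will only outline how one proves it. The plan is to simulate the CONGEST algorithm $A_C$ round by round on the $k$-machine clique. Because the random partition assigns each vertex independently and uniformly to one of the $k$ machines, a Chernoff bound shows that with high probability each machine hosts $\tilde{O}(n/k)$ vertices; since every machine is told the home of every neighbor of each vertex it hosts, a simulating machine always knows exactly which direct link to use for any cross-machine message in the simulation.

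The work then reduces to routing the CONGEST messages across the clique. I would invoke Lenzen's all-to-all routing primitive, which delivers any batch in which each machine is source or destination of at most $L$ messages in $\tilde{O}(L/k)$ rounds on a $k$-machine clique with $O(\log n)$-bandwidth links. The simulation is analyzed in two complementary ways. Globally, the entire execution of $A_C$ sends $M$ messages in total, which pipelined routing across the clique delivers in $\tilde{O}(M/k^2)$ rounds, since the clique offers $\Theta(k^2)$ units of aggregate bandwidth per round. Per round, a synchronization barrier between CONGEST rounds $r$ and $r+1$ forces an additive overhead, and charging each of the $T_C(n)$ rounds a flushing cost related to the per-round communication-degree $C$ (after the random placement balances destinations across the $k$ links leaving each machine) accumulates to the $\tilde{O}(T_C(n)\,C/k)$ term.

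Adding the two contributions gives the claimed $\tilde{O}(M/k^2 + T_C(n)\,C/k)$ round complexity, and a union bound over the Chernoff load-balance event, the failure of the routing primitive, and the inherited success probability of $A_C$ yields the stated $1-\epsilon$ guarantee, after absorbing the $\polylog(n)$ factors into $\epsilon$. The main technical obstacle is reconciling the two bookkeeping arguments: the $M/k^2$ term only materializes if one pipelines messages aggressively across CONGEST-round boundaries, while the correctness of the simulation demands that every round-$r$ message be delivered before any machine begins simulating round $r+1$. Getting Lenzen's scheduling right so that no link is ever overloaded while the round-by-round simulation semantics of $A_C$ are preserved is precisely where the bandwidth of all $\binom{k}{2}$ links, rather than just the links incident to one machine, has to be carefully exploited.
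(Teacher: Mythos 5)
The paper does not actually prove this statement---it is quoted from Klauck et al.~\cite{klauck2015distributed}, and the authors explicitly omit the (constructive) simulation, so there is no in-paper proof to match your sketch against. Judged against the argument in the cited reference, your outline gets the overall shape right: randomly partitioned vertices, each machine simulating the CONGEST execution of its hosted vertices round by round, with the home-machine information sufficing to route every cross-machine message, and a load-balancing argument driven by the random partition.

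However, the way you derive the two terms is off, and you yourself flag the resulting tension without resolving it---which is the real gap. The bound $\tilde{O}(M/k^2 + T_C(n)\,C/k)$ does not come from an ``aggressive pipelining across CONGEST-round boundaries'' analysis glued to a separate per-round analysis, and no Lenzen-style all-to-all routing primitive is needed. In the actual simulation, a message from $u$ to $v$ in round $r$ is simply sent over the single direct link joining their home machines; for a fixed link and fixed round $r$, the expected number of messages crossing it is $O(m_r/k^2)$ (where $m_r$ is the number of messages of round $r$, $\sum_r m_r = M$), and because the dependencies among these events are controlled by the communication degree complexity $C$, a concentration bound gives a per-link, per-round load of $\tilde{O}(m_r/k^2 + C/k)$ with high probability. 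Since each link carries $O(\log n)$ bits per round, simulating round $r$ costs $\tilde{O}(m_r/k^2 + C/k)$ rounds, and summing over the $T_C(n)$ rounds yields both terms simultaneously---fully respecting the barrier between consecutive CONGEST rounds, so the conflict you describe between correctness and the $M/k^2$ term never arises. As written, your ``global'' bookkeeping (deliver all $M$ messages in $\tilde{O}(M/k^2)$ rounds) is not a valid standalone bound precisely because it ignores the round structure, so the two complementary analyses you propose do not combine into a proof; the missing ingredient is the single per-round link-load lemma containing both the $m_r/k^2$ and $C/k$ contributions. A further small inaccuracy: the $1-\epsilon$ guarantee is inherited solely from $A_C$, while the time bound holds with high probability (at least $1-1/n$); one does not absorb polylogarithmic factors ``into $\epsilon$.''
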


In the forthcoming theorems, by ``high probability'' we mean with probability at
least $1- 1/n$.

We note that the proof of the Conversion Theorem is constructive, describing
precisely how an algorithm $A_k$ in the $k$-machine model simulates the
algorithm $A_C$ in the CONGEST model.  We omit the simulation here but encourage
the reader to refer to the proof for implementation details.

By Theorem~\ref{thm:phkprtime}, we know that PHKPR values can be estimated with
$\epsilon$-accuracy in $O\left( \Kparam \right)$ rounds.  A total of $O\left(
\rparam \right)$ messages are generated and propogated for at most $O\left(
\Kparam \right)$ random walk steps, for a total of $O\left( \rK \right)$
messages sent during a run of the algorithm.  In the first random walk step,
each of the $O\left( \rparam \right)$ messages may be passed to a neighbor of
the seed node, so the message complexity is $O\left( \rparam \right)$.
Therefore we arrive at the following.

\begin{theorem}
There exists an algorithm that solves the PHKPR problem for any $n$-node graph
in the $k$-machine model with probability at least $1-\epsilon$ and runs in
$\tilde{O}\left( \frac{\log(\epsilon^{-1})}{\epsilon^3 k
\log\log(\epsilon^{-1})} (\frac{1}{k} + 1) \right)$ rounds with high
probability.
\end{theorem}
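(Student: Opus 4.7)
The plan is to apply the Conversion Theorem directly to the CONGEST algorithm \estphkpr, so all that is needed is to extract its message complexity $M$, communication degree complexity $C$, and running time $T_C(n)$, then simplify the resulting $\tilde{O}(M/k^2 + T_C(n)C/k)$ bound.

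First I would recall from Theorem~\ref{thm:phkprtime} that \estphkpr{} runs in $T_C(n) = O(\Kparam)$ rounds in the CONGEST model. Next I would bound $M$: the seed creates $r = \frac{16}{\epsilon^3}\log n$ tokens, and each token is forwarded along exactly one edge per round of its random walk, for at most $K = O(\Kparam)$ rounds. Since the token count is conserved throughout the execution, the total number of $O(\log n)$-bit messages sent over all rounds is $M = O(rK) = O\!\left(\rK\right)$.

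For the communication degree complexity, the worst-case round is the very first one, in which all $r$ tokens sit at the seed $s$ and may be dispatched to neighbors of $s$ simultaneously. Every subsequent round can only redistribute tokens already in flight, so no node ever sends or receives more than $O(r)$ messages in a round. Hence $C = O(r) = O\!\left(\rparam\right)$.

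Plugging into the Conversion Theorem gives
\begin{equation*}
\tilde{O}\!\left(\frac{M}{k^2} + \frac{T_C(n)\,C}{k}\right)
= \tilde{O}\!\left(\frac{\log(\epsilon^{-1})\log n}{\epsilon^3 k^2 \log\log(\epsilon^{-1})} + \frac{\log(\epsilon^{-1})}{\log\log(\epsilon^{-1})}\cdot\frac{\log n}{\epsilon^3 k}\right),
\end{equation*}
and absorbing the $\log n$ factor into the $\tilde{O}(\cdot)$ yields the stated bound $\tilde{O}\!\left(\frac{\log(\epsilon^{-1})}{\epsilon^3 k \log\log(\epsilon^{-1})}\bigl(\tfrac{1}{k} + 1\bigr)\right)$. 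The correctness and $1-\epsilon$ success probability are inherited verbatim from \estphkpr{} via the Conversion Theorem. There is no real obstacle here; the only mild subtlety is justifying that the worst-case $C$ is attained only at the first round (token conservation), which makes clear that $C$ is not inflated by later rounds of the walk.
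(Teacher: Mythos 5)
Your proposal is correct and follows essentially the same route as the paper: apply the Conversion Theorem to \estphkpr{} with $T_C(n)=O\left(\Kparam\right)$, $M=O\left(\rK\right)$, and $C=O\left(\rparam\right)$ (attained at the first step when all tokens leave the seed), then absorb the $\log n$ factor into $\tilde{O}(\cdot)$. Your added remark about token conservation is a small clarification beyond what the paper writes, but the argument is the same.
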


By Theorem~\ref{thm:localcluster}, a local cluster about any seed node can be
computed in $O\left( \localclustercomplexity \right)$ rounds.  The message
complexity for the PHKPR phase is $O\left( \left( \rK \right)\log n \right)$ and
for the sweep phase is $O\left( \frac{1}{\epsilon} \log n \right)$, for a total
message complexity of $O\left( \frac{\log(\epsilon^{-1}) \log^2 n}{\epsilon^3
\log\log(\epsilon^{-1})} + \frac{1}{\epsilon} \log n \right)$.  The
communication degree complexity is $O\left( \rparam \right)$ for the PHKPR phase
(as above), and $O(\Delta)$, where $\Delta$ is the maximum degree in the graph,
for the sweep phase.  Thus the communication degree complexity for the algorithm
is the maximum of these two.  We therefore have the following result for the
$k$-machine model.

\begin{theorem}\label{thm:localclusterkmachine}
There exists an algorithm that computes a local cluster for any $n$-node graph
in the $k$-machine model with probability at least $1-\epsilon$ and runs in
$\tilde{O}\left( \frac{\log(\epsilon^{-1})}{\epsilon^3 k^2
\log\log(\epsilon^{-1})} + \frac{1}{\epsilon k^2} + \left(
\frac{\log(\epsilon^{-1})}{k \log\log(\epsilon^{-1})} +
\frac{1}{k\epsilon}\right)\max\left\{ \frac{1}{\epsilon^{3}}, \Delta \right\}
\right)$ rounds, where $\Delta$ is the maximum degree in the graph, with high
probability.
\end{theorem}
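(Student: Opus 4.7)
The plan is to apply the Conversion Theorem to the CONGEST algorithm \alglocalcluster~(together with the outer binary-search wrapper from Theorem~\ref{thm:localcluster}). To do this I need three quantities: the CONGEST running time $T_C(n)$, the message complexity $M$, and the communication-degree complexity $C$. The running time is already in hand from Theorem~\ref{thm:localcluster}, namely $T_C(n)=O\!\left(\localclustercomplexity\right)$, so the real work is in bounding $M$ and $C$ and then plugging these into the expression $\tilde O(M/k^2 + T_C(n)\,C/k)$ and simplifying.

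For $M$, I would split the analysis between the PHKPR subroutine and the sweep. In the PHKPR phase the algorithm launches $r=O(\rparam)$ tokens, each of which is transmitted in at most $K=O(\Kparam)$ iterations, and each token message is $O(\log n)$ bits (just the residual counter). Summing over all iterations gives $M_{\mathrm{phkpr}}=O\!\left(\rK\log n\right)$. In the sweep phase, Phase~1 and Phase~2 each upcast/flood at most $N=O(1/\epsilon)$ messages of size $O(\log n)$ through a BFS tree spanning the nodes with nonzero PHKPR estimate, giving $M_{\mathrm{sweep}}=O\!\left(\frac{\log n}{\epsilon}\right)$. Adding the two yields the expression for $M$ asserted in the paragraph preceding the theorem.

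For $C$, I need the maximum number of messages sent or received by any single node in any single round. In the PHKPR phase, the worst case is the very first round, where all $r$ tokens still reside at the seed and may be dispatched through its incident edges; this gives $C_{\mathrm{phkpr}}=O(\rparam)$. In the sweep phase, each node only communicates with its graph-neighbors inside the induced BFS tree while upcasting the $(\mathrm{ID},L_j^\pi,R_j^\pi)$ triples, so no node handles more than $O(\Delta)$ messages per round. Taking the max gives $C=\max\{O(1/\epsilon^3),O(\Delta)\}$ (after absorbing logarithmic factors into the $\tilde O$).

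Finally, I would substitute these quantities into the Conversion Theorem. The term $M/k^2$ contributes $\tilde O\!\left(\frac{\log(\epsilon^{-1})}{\epsilon^3 k^2 \log\log(\epsilon^{-1})}+\frac{1}{\epsilon k^2}\right)$, while $T_C(n)\,C/k$ contributes $\tilde O\!\left(\bigl(\frac{\log(\epsilon^{-1})}{k\log\log(\epsilon^{-1})}+\frac{1}{k\epsilon}\bigr)\cdot\max\{\tfrac{1}{\epsilon^3},\Delta\}\right)$, which sums to the stated bound; the success probability follows from Theorem~\ref{thm:localcluster} carried through the Conversion Theorem. The main obstacle I expect is the communication-degree accounting: one must be careful that, in the PHKPR phase, the bandwidth-per-edge constraint of the underlying CONGEST argument in the proof of Theorem~\ref{thm:phkprtime} does not conflict with the looser per-node in/out-degree count needed here, and that the seed is indeed the pointwise maximizer of per-round traffic. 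Verifying this (and checking that the binary-search wrapper over $O(\log n)$ guesses of $\phi$ only inflates $M$ and $T_C(n)$ by polylogarithmic factors absorbed in the $\tilde O$) is the only delicate piece of the reduction.
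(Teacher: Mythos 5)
Your proposal matches the paper's own argument essentially step for step: bound $M$ by the PHKPR token count $O\bigl(\rK\cdot\log n\bigr)$ plus the $O\bigl(\frac{1}{\epsilon}\log n\bigr)$ sweep messages, bound $C$ by $\max\{O(\rparam),O(\Delta)\}$ with the seed's first dispatch round as the worst case, and plug $T_C(n)$, $M$, $C$ into the Conversion Theorem. The only cosmetic difference is that the paper's extra $\log n$ factors come from counting messages (and the $O(\log n)$ guesses of $\phi$) rather than bits per message, but this is absorbed in the $\tilde{O}$ exactly as you note, so the proposal is correct and takes the same route.
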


\section{Acknowledgements}
The authors would like to warmly thank Yiannis Koutis for discussion and for
suggesting the problem of finding efficient distributed algorithms, as well as
the anonymous reviewers for their suggestions for improving the paper.

{\footnotesize
\bibliographystyle{splncs03}
\bibliography{distributed,compute}
}

\end{document}